\documentclass[journal]{IEEEtran} 				
\IEEEoverridecommandlockouts 						
\usepackage{amsmath,amssymb}
\usepackage{amsthm}
\usepackage{color}
\usepackage[mathscr]{eucal}
\usepackage{graphics,graphicx,multicol}
\usepackage{epstopdf}
\usepackage{enumerate}
\usepackage{subfigure}
\usepackage{algorithmic}
\usepackage{algorithm}
\usepackage{morefloats}
\usepackage{enumerate}
\usepackage{subfigure}
\usepackage{multirow}
\usepackage{array}
\usepackage{pstricks, pst-node, pst-plot, pst-circ}
\usepackage{moredefs}
\usepackage{cite}
\usepackage{courier}
\newtheorem{thm}{Theorem}[section]
\newtheorem{lem}[thm]{Lemma}

\pagenumbering{arabic}
\pagestyle{plain}
\pagenumbering{gobble}

\begin{document}
\title{An Efficient Multi-Carrier Resource Allocation with User Discrimination Framework for 5G Wireless Systems}
\author{Haya~Shajaiah,
        Ahmed~Abdelhadi
        and~Charles~Clancy
\thanks{H. Shajaiah, A. Abdelhadi, and C. Clancy are with the Hume Center for National Security and Technology, Virginia Tech, Arlington,
VA, 22203 USA e-mail: \{hayajs, aabdelhadi, tcc\}@vt.edu.}
}
\maketitle
\begin{abstract}
In this paper, we present an efficient resource allocation with user discrimination framework for 5G Wireless Systems to allocate multiple carriers resources among users with elastic and inelastic traffic. Each application running on the user equipment (UE) is assigned an application utility function. In the proposed model, different classes of user groups are considered and users are partitioned into different groups based on the carriers coverage area. Each user has a minimum required application rate based on its class and the type of its application. Our objective is to allocate multiple carriers resources optimally among users, that belong to different classes, located within the carriers' coverage area. We use a utility proportional fairness approach in the utility percentage of the application running on the UE. Each user is guaranteed a minimum quality of service (QoS) with a priority criterion that is based on user's class and the type of application running on the UE. In addition, we prove the existence of optimal solutions for the proposed resource allocation optimization problem and present a multi-carrier resource allocation with user discrimination algorithm. Finally, we present simulation results for the performance of the proposed algorithm.
\end{abstract}

\begin{IEEEkeywords}
Multi-Carrier Resource Allocation, User Discrimination, Utility Proportional Fairness, Minimum Required Application Rate
\end{IEEEkeywords}

\providelength{\AxesLineWidth}       \setlength{\AxesLineWidth}{0.5pt}%
\providelength{\plotwidth}           \setlength{\plotwidth}{8cm}
\providelength{\LineWidth}           \setlength{\LineWidth}{0.7pt}%
\providelength{\MarkerSize}          \setlength{\MarkerSize}{3pt}%
\newrgbcolor{GridColor}{0.8 0.8 0.8}%
\newrgbcolor{GridColor2}{0.5 0.5 0.5}%

\section{Introduction}\label{sec:intro}
Recently, there has been a massive growth in the number of mobile users and their traffic. The data traffic volume almost doubles every year. Mobile users are currently running multiple applications that require higher bandwidth which makes users so limited to the service providers' resources. Multiple services  are now offered by network providers such as mobile-TV and multimedia telephony \cite{QoS_3GPP}. According to the Cisco Visual Networking Index (VNI) \cite{Visual-Networking}, the volume of data traffic is expected to continue growing up and reaches 1000 times its value in 2010 by 2020 which is referred to as 1000x data challenge. With the increasing volume of data traffic, more spectrum is required \cite{Carrier_Agg_1}. However, due to spectrum scarcity, it is difficult to provide the required resources with a single frequency band. Therefore, aggregating frequency bands, that belong to different carriers, is needed to utilize the radio resources across multiple carriers and expand the effective bandwidth delivered to user terminals, leading to interband non-contiguous carrier aggregation \cite{Carrier_Agg_2}.

As the fourth generation long term evolution (4G-LTE) system is now reaching maturity and only small amounts of new spectrum is expected, researchers have started to establish the foundation of 5G system that should be coming next. The expected capabilities of 5G systems have started to take shape. The 5G networks have promised to handle multiple applications with various QoS requirements, address the 1000x data challenge and provide as low as 1 millisecond latency. Due to the new emerging applications that are beyond personal communications, the number of wireless devices could reach hundreds of billions by the time 5G comes to fruition \cite{The-challenges,proximity-aware}. Because of the urgently need for 5G, government and industries agencies have initiated the research and development process of 5G. The development of 5G requires efforts in three areas: enhancement in spectrum efficiency, spectrum expansion and dense deployment of small cells.

Carrier aggregation (CA) is one of the most distinct features of 4G systems including Long Term Evolution Advanced (LTE Advanced). As 5G systems' expected capabilities have started to take shape, CA is expected to be supported by 5G. Therefore, CA needs to be taken into consideration when designing 5G systems. With CA being applied, wider transmission bandwidths between the evolve node B (eNodeB) and the UE can be achieved by aggregating multiple component carriers (CCs) of the same or different bandwidths. An overview of CA framework and cases is presented in \cite{CA-framework,work-item}. 
Beside CA capability, 5G wireless network promises to handle diverse QoS requirements of multiple applications since different applications require different application's performance. Furthermore, certain types of users may require to be given priority when allocating the network resources (i.e. such as public safety users) which needs to be taken into consideration when designing the resource allocation framework.
%

A multi-stage resource allocation (RA) with carrier aggregation algorithms are presented in \cite{Haya_Utility1,Haya_Utility3,Haya_Utility6}. The RA with CA algorithm in \cite{Haya_Utility1} uses utility proportional fairness approach to allocate primary and secondary carriers resources optimally among users in their coverage area. The primary carrier first allocates its resources optimally among users in its coverage area. The secondary carrier then starts allocating its resources optimally to users in its coverage area based on the rates allocated to the users by the primary carrier and the users applications. A resource allocation with CA optimization problem is presented in \cite{Haya_Utility3} to allocate the LTE Advanced carrier and the MIMO radar carrier resources to each UE in a LTE Advanced cell based on the UE's applications. A price selective centralized resource allocation with CA algorithm is presented in \cite{Haya_Utility6} to allocate multiple carriers resources optimally among users while giving the user the ability to select its primary and secondary carriers. The carrier selection decision is based on the carrier price per unit bandwidth. A RA with user discrimination optimization framework is presented in \cite{Haya_Utility2} and \cite{Haya_Utility4} to allocate one carrier resources among users under the carrier's coverage area.

In this paper, we provide an efficient framework for the resource allocation problem to allocate multi-carrier resources optimally among users that belong to different classes of user groups. In our model, we use utility functions to represent users' applications. Sigmoidal-like utility functions and logarithmic utility functions are used to represent real-time and delay-tolerant applications, respectively, running on the UEs \cite{Ahmed_Utility1}.
The resource allocation with user discrimination framework presented in \cite{Haya_Utility4} does not consider the case of multi-carrier resources available at the eNodeB. It only solves the problem of resource allocation with user discrimination in the case of single carrier. In this paper, we consider the case of multiple carriers' resources available at the eNodeB and multiple classes of users located under the coverage area of these carriers. We use a priority criterion for the resource allocation process that varies based on the user's class and the type of application running on the UE. We consider two classes of users, VIP users (i.e. public safety users or users who require emergency services) and regular users. VIP users are assigned a minimum required application rate for each of their applications whereas regular users' applications are not assigned any.

We formulate the resource allocation with user discrimination problem in a multi-stage resource allocation with carrier aggregation optimization problem to allocate resources to each user from its all in range carriers based on a utility proportional fairness policy. Each application running on the UE is assigned an application minimum required rate by the network that varies based on the type of user's application and the user's class. Furthermore, if the user's in range carriers have enough available resources, the user is allocated at minimum its applications' minimum required rates. VIP users are given priority over regular users by the network when allocating each carrier's resources, and real-time applications are given priority over delay-tolerant applications.
\subsection{Related Work}\label{sec:related}

There has been several works in the area of optimizing resource allocation to achieve an efficient utilization of the scarce radio spectrum. In \cite{kelly98ratecontrol,Internet_Congestion,Optimization_flow,Fair_endtoend}, the authors have used utility functions to represent users traffic. They used a strictly concave utility function to represent elastic traffic and proposed distributed algorithms at the sources and the links to interpret the congestion control of communication networks. Their suggested approach only focussed on elastic traffic and did not consider real-time applications as it have non-concave utility functions as shown in \cite{fundamental_design}. In \cite{Utility_max-min} and \cite{ Fair_allocation}, the authors have argued that the utility function is the one that needs to be shared fairly, rather than the bandwidth, as it represents the performance of the user's application.

In \cite{kelly98ratecontrol}, the authors have introduced a proportional fairness resource allocation approach. However, their approach does not guarantee a minimum QoS for each user application. To overcome this issue, the authors in \cite{Ahmed_Utility1} introduced a utility proportional fairness resource allocation algorithm. Their approach respects the real-time applications inelastic behavior and therefore we believe that it is more appropriate. The utility proportional fairness resource allocation algorithm presented in \cite{Ahmed_Utility1} guarantees that no user is allocated zero rate and gives real-time applications priority over delay tolerant applications when allocating resources. In \cite{Ahmed_Utility1, Ahmed_Utility2} and \cite{ Ahmed_Utility3}, the authors have presented RA algorithms to allocate single carrier resources optimally among mobile users who are treated evenly. However, these algorithms do not support resource allocation with user discrimination for multi-carrier systems. To incorporate the carrier aggregation feature and the case of different classes of users, we have introduced a multi-stage resource allocation using carrier aggregation in \cite{Haya_Utility1}. Furthermore, in \cite{Haya_Utility2} and \cite{Haya_Utility4}, we presented resource allocation with users discrimination algorithms to allocate a single carrier resources optimally among mobile users running elastic and inelastic traffic. In \cite{Mo_ResourceBlock}, the authors have presented a radio resource block allocation optimization problem using a utility proportional fairness approach. The authors in \cite{Tugba_ApplicationAware} have presented an application-aware resource block scheduling approach for elastic and inelastic traffic by assigning users to resource blocks.

On the other hand, an extensive attention has been recently given to the resource allocation for single cell multi-carrier systems \cite{Dual-Decomposition, Resource_allocation, Rate_Balancing}. In \cite{Fair_resource,Design_of_Fair,Fast_Algorithms,Optimal_and_near-optimal}, the authors have represented this challenge in optimization problems frameworks. Their objective is to maximize the overall cell throughput while taking into consideration some constraints such as transmission power and fairness. However, rather than achieving better system-centric throughput, better user satisfaction can be achieved by transforming the problem into a utility maximization framework. The authors in \cite{Downlink_dynamic,Centralized_vs_Distributed} have focussed on reducing the implementation complexity and suggested using a distributed resource allocation rather than a centralized one. The authors in \cite{Cooperative_Fair_Scheduling} have proposed a collaborative scheme in a multiple base stations (BSs) environment, where each user is served by the BS with the best channel gain. The authors in \cite{DownlinkRadio} have addressed the problem of spectrum resource allocation with CA based LTE Advanced systems, by considering the UE's MIMO capability and the modulation and coding schemes (MCSs) selection.
\subsection{Our Contributions}\label{sec:contributions}
Our contributions in this paper are summarized as:
\begin{itemize}
\item We present a multi-stage resource allocation with user discrimination optimization problem to allocate multi-carrier resources optimally among different classes of users.
\item We prove that the resource allocation optimization problem is convex and therefore the global optimal solution is tractable.
\item We present a resource allocation algorithm to solve the optimization problem and allocate each user an aggregated final rate from its in range carriers. The proposed algorithm outperforms that presented in \cite{Haya_Utility4} as it considers allocating each user resources from multiple carriers using a resource allocation with carrier aggregation approach.
\item We present simulation results for the performance of the proposed resource allocation algorithm.
\end{itemize}

The remainder of this paper is organized as follows. Section \ref{sec:Problem_formulation} presents the problem formulation. In section \ref{sec:ResourceAllocation}, we present the resource allocation optimization problems for three cases and prove that the global optimal solution exists and is tractable. Section \ref{sec:Algorithm} presents our multi-carrier resource allocation with user discrimination algorithm. In section \ref{sec:sim}, we discuss simulation setup and provide quantitative results along with discussion. Section \ref{sec:conclude} concludes the paper.
\section{Problem Formulation}\label{sec:Problem_formulation}
In this paper, we consider a single cell mobile system with one eNodeB, $K$ carriers (frequency bands) that have resources available at the eNodeB, $M$ regular and VIP UEs. Let $\mathcal{M}$ be the set of all regular and VIP UEs where $M=|\mathcal{M}|$. The set of carriers is given by $\mathcal{K}=\{1,2,...,K\}$ with carriers in order from the highest frequency to the lowest frequency. Higher frequency carriers have smaller coverage area than lower frequency carriers. The eNodeB allocates resources from multiple carriers to each UE. Users located under the coverage area of multiple carriers are allocated resources from all in range carriers.
The rate allocated by the eNodeB to UE $i$ from all in range carriers is given by $r_i$. Each application running on the UE is mathematically represented by a utility function $U_i(r_i)$ that corresponds to the application's type and represents the user satisfaction with its allocated rate $r_i$ as shown in section \ref{sec:utility functions}. Our goal is to determine the optimal rates that the eNodeB shall allocate from each carrier to each UE in order to maximize the total system utility while ensuring proportional fairness between utilities.

The rate allocated to the $i^{th}$ user in $\mathcal{M}$ by the $j^{th}$ carrier in $\mathcal{K}$ is given by $r_i^{j,all}$. The final allocated rate by the eNodeB to the $i^{th}$ user is given by
\begin{equation}\label{eqn:rate r_i}
r_i = \sum_{j\in \mathcal{K}}r_i^{j,all}
\end{equation}
where $r_i$ is equivalent to the sum of rates allocated to the $i^{th}$ user from all carriers in its range. Based on the coverage area of each carrier and the users' classes, a user grouping method is introduced in \ref{sec:UsersGrouping} to partition users into groups. The eNodeB performs resource allocation with user discrimination based on carrier aggregation to allocate each carrier's resources to users located within the coverage area of that carrier.
\subsection{Application Utility Functions}\label{sec:utility functions}
We express the user satisfaction with its rate using utility functions that represent the degree of satisfaction of the user function with the rate allocated by the cellular network \cite{DL_PowerAllocation,Fundamental,Utility-proportional,Ahmed_Utility2}. We represent the $i^{th}$ user application utility function $U_i(r_i)$ by sigmoidal-like function or logarithmic function where $r_i$ is the rate of the $i^{th}$ user. These utility functions have the following properties:
\begin{itemize}
\item $U_i(0) = 0$ and $U_i(r_i)$ is an increasing function of $r_i$.
\item $U_i(r_i)$ is twice continuously differentiable in $r_i$ and bounded above.
\end{itemize}

In our model, we use the normalized sigmoidal-like utility function, as in \cite{Ahmed_Utility2}, that can be expressed as
\begin{equation}\label{eqn:sigmoid}
U_i(r_i) = c_i\Big(\frac{1}{1+e^{-a_i(r_i-b_i)}}-d_i\Big),
\end{equation}
where $c_i = \frac{1+e^{a_ib_i}}{e^{a_ib_i}}$ and $d_i = \frac{1}{1+e^{a_ib_i}}$ so it satisfies $U_i(0)=0$ and $U_i(\infty)=1$. The normalized sigmoidal-like function has an inflection point at $r_i^{\text{inf}}=b_i$. In addition, we use the normalized logarithmic utility function, used in \cite{Ahmed_Utility1}, that can be expressed as
\begin{equation}\label{eqn:log}
U_i(r_i) = \frac{\log(1+k_ir_i)}{\log(1+k_ir_i^{\text{max}})},
\end{equation}
where $r_i^{\text{max}}$ gives $100\%$ utilization and $k_i$ is the slope of the curve that varies based on the user application. So, it satisfies $U_i(0)=0$ and $U_i(r_i^{\text{max}})=1$.
\subsection{User Grouping Method}\label{sec:UsersGrouping}
In this section we introduce a user grouping method to create user groups for each carrier $j \in \mathcal{K}$. The eNodeB creates a user group $\mathcal{M}_j$ for each carrier where $\mathcal{M}_j$ is a set of users located under the coverage area of the $j^{th}$ carrier. The number of users in $\mathcal{M}_j$ is given by $M_j=|\mathcal{M}_j|$. Furthermore, users in $\mathcal{M}_j$ are partitioned into two groups of users. A VIP user group $\mathcal{M}_j^{\text{VIP}}$ and a regular user group $\mathcal{M}_j^{\text{Reg}}$, where $\mathcal{M}_j^{\text{VIP}}$ and $\mathcal{M}_j^{\text{Reg}}$ are the sets of all VIP users and regular users, respectively, located under the coverage area of the $j^{th}$ carrier with $\mathcal{M}_j=\mathcal{M}_j^{\text{VIP}}\cup \mathcal{M}_j^{\text{Reg}}$. The number of users in $\mathcal{M}_j^{\text{VIP}}$ and $\mathcal{M}_j^{\text{Reg}}$ is given by $M_j^{\text{VIP}}=|\mathcal{M}_j^{\text{VIP}}|$ and $M_j^{\text{Reg}}=|\mathcal{M}_j^{\text{Reg}}|$, respectively. The eNodeB allocates the $j^{th}$ carrier resources to users in $\mathcal{M}_j$ with a priority given to VIP users (i.e. users in $\mathcal{M}_j^{\text{VIP}}$). Users located under the coverage area of multiple carriers (i.e. common users in multiple user groups) are allocated resources from these carriers and their final rates are aggregated under a non adjacent inter band aggregation scenario.

The $i^{th}$ user is considered part of user group $\mathcal{M}_j$ if it is located within a distance of $D_j$ from the eNodeB where $D_j$ represents the coverage radius of the $j^{th}$ carrier. Let $d_i$ denotes the distance between the eNodeB and user $i$. The $j^{th}$ carrier user group $\mathcal{M}_j$ is defined as
\begin{equation}\label{eqn:user_group}
\mathcal{M}_j = \{i : d_i<D_j, 1\leq i \leq M\}, 1 \leq j \leq K.
\end{equation}

On the other hand, the eNodeB creates a set of carriers $\mathcal{K}_i$, for each user, that is defined as
\begin{equation}\label{eqn:K_i}
\mathcal{K}_i = \{j : d_i<D_j, 1\leq j \leq K\}, 1 \leq i \leq M.
\end{equation}

The number of carriers that the $i^{th}$ user can be allocated resources from is given by $N_i=|\mathcal{K}_i|$. Higher frequency carriers have smaller coverage radius than lower frequency carriers (i.e. $D_1<D_2<...<D_K$). Therefore, user group $\mathcal{M}_1 \subseteq \mathcal{M}_2 \subseteq ...\subseteq \mathcal{M}_K$. Figure \ref{fig:UserGroups} shows one cellular cell with one eNodeB under non adjacent inter band scenario with $K$ carriers in $\mathcal{K}$ and $M$ users in $\mathcal{M}$ and how users are partitioned into user groups based on their location and their class.
\begin{figure}[tb]
\centering
\includegraphics[height=2.4in, width=2.4in]{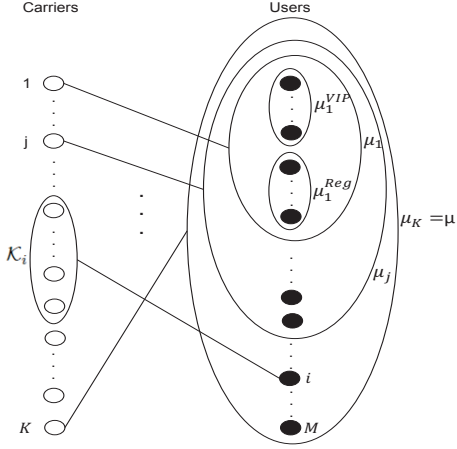}
\caption{User grouping for a LTE mobile system with $M$ users in $\mathcal{M}$ and $K$ carriers in $\mathcal{K}$. $\mathcal{M}_j$ represents the set of users located under the coverage area of the $j^{th}$ carrier with $\mathcal{M}_j=\mathcal{M}_j^{VIP}\cup \mathcal{M}_j^{Reg}$. $\mathcal{K}_i$ represents the set of all in range carriers for the $i^{th}$ user.}
\label{fig:UserGroups}
\end{figure}
\section{Multi-Carrier Resource Allocation with User discrimination Optimization Problem}\label{sec:ResourceAllocation}
In this section, we present a multi-stage resource allocation (RA) with user discrimination optimization problem to allocate multi-carrier resources optimally among users in their coverage area. Our objective is to find the final allocated rate to each user from its all in range carriers based on a utility proportional fairness policy. We use utility functions of users rates to represent the type of application running on the UE. Every user subscribing for a mobile service is guaranteed to achieve a minimum QoS with priority criterion. VIP users are given priority when allocating each carrier's resources and within each user class group, whether it is VIP or regular user group, real time applications are given priority when allocating each carrier's resources. This is due to the nature of sigmoidal-like utility functions that are used to represent real-time applications.

The eNodeB performs the resource allocation process for all carriers one at a time and one after another in ascending order of their coverage radius $D_j$. Each carrier $j \in \mathcal{K}$ has a limited amount of available resources that is given by $R_j$ and each user's application has a minimum required rate $r_i^{\text{req}}$ that is equivalent to zero in the case of regular users and is equivalent to certain value (i.e. rate) in the case of VIP users. The eNodeB starts the RA process by performing a RA for carrier $1$ in $\mathcal{K}$ as it has the smallest coverage radius $D_1$. After allocating its resources to users in $\mathcal{M}_1$, the eNodeB then starts the RA process to allocate carrier $2$ resources to users in $\mathcal{M}_2$. In addition, since $\mathcal{M}_1 \subseteq \mathcal{M}_2$ the eNodeB allocates users in $\mathcal{M}_1$ resources from carrier $2$ and the rates are aggregated based on a non adjacent inter band aggregation scenario. The eNodeB continues the resource allocation process by allocating the $j^{th}$ carrier resources to users in $\mathcal{M}_j$. Let $r_i^{j,\text{all}}$ represents the rate allocated by the $j^{th}$ carrier to UE $i$ and let $C_i$ represents the total aggregated rate allocated to UE $i$ by carriers $\{1,2,...,j-1\}$ where $C_i=\sum_{l=1}^{j-1}r_i^{l,\text{all}}$. Furthermore, let $C_i^j$ be a constant that is always equivalent to zero for regular users whereas for VIP users $C_i^j$ is equivalent to zero or $r_i^{\text{req}}-C_i$ based on some conditions that are discussed later in this section. The resource allocation process is finalized by allocating the $K^{th}$ carrier resources to users in $\mathcal{M}_K$, i.e. all users in the cellular cell as they are all located within its coverage radius. We consider a utility proportional fairness objective function, based on carrier aggregation, that the eNodeB seeks to maximize for each time it allocates a carrier's resources.

The proposed RA optimization problem for multi-carrier cellular systems is divided into three cases. In order for the eNodeB to guarantee that VIP users are given priority when allocating each carrier's resources, each time the eNodeB performs a RA process for a carrier it checks the values of 1) the carrier's available resources $R_j$, 2) the current total rate allocated to each VIP UE $i \in \mathcal{M}_j^{\text{VIP}}$ from other carriers (i.e. $C_i=\sum_{l=1}^{j-1}r_i^{l,\text{all}}$) and 3) the value of $r_i^{\text{req}}-C_i$ for each VIP UE $i \in \mathcal{M}_j^{\text{VIP}}$ if $C_i < r_i^{\text{req}}$. Based on these values, the eNodB performs the RA process that corresponds to the most appropriate case among the three cases. The three cases and their RA optimization framework are presented below.\\
\\\textit{Case$\;1.$ RA Optimization Problem when $C_i \geq r_i^{\text{req}}\:\:\forall i \in \mathcal{M}_j$}:

The eNodeB chooses the RA optimization problem of this case in order to allocate the $j^{th}$ carrier resources if the total aggregated rate $C_i$ that is allocated to each UE $i \in \mathcal{M}_j$ from carriers $\{1,2,...,j-1\}$ is greater than or equal the minimum required application rate $r_i^{\text{req}}$. In this case, since each UE has already been allocated at least its application minimum required rate from other carriers, the eNodeB performs the RA process among all users under the coverage area of carrier $j$. The RA optimization problem for the $j^{th}$ carrier in this case is given by:
\begin{equation}\label{eqn:opt_case1}
\begin{aligned}
& \underset{\textbf{r}^j}{\text{max}}
& & \prod_{i=1}^{M_j}U_i(C_i+C_i^j+r_i^j) \\
& \text{subject to}
& & \sum_{i=1}^{M_j}r_i^{j,\text{all}} \leq R_j, \;\; r_i^{j,\text{all}} \geq 0\\
& & & r_i^{j,\text{all}}=r_i^j+C_i^j, \;\; C_i^j=0\\
& & & C_i=\sum_{l=1}^{j-1}r_i^{l,\text{all}}, \;\; C_i \geq r_i^{\text{req}}, \;\;\; i = 1,2, ...,M_j,\\
\end{aligned}
\end{equation}
where $C_i^j$ is a constant that is equivalent to zero in this case, $U_i(C_i+C_i^j+r_i^j)$ is the utility function of the summation of the rate $C_i$ allocated to the application running on the $i^{th}$ user by carriers $\{1,2,...,j-1\}$ and the rate $r_i^{j,\text{all}}$ allocated to the same application by carrier $j$ where $r_i^{j,\text{all}}=C_i^j+r_i^j$, $\textbf{r}^j =\{r_1^j,r_2^j,...,r_{M_j}^j\}$ and $M_j$ is the number of users in $\mathcal{M}_j$ (i.e. both VIP and regular users) located under the coverage area of the $j^{th}$ carrier. After the eNodeB performs the RA process for the $j^{th}$ carrier by solving optimization problem (\ref{eqn:opt_case1}), the total rate allocated to each user by the eNodeB is equivalent to $C_i+r_i^{j,\text{all}}$. In optimization problem (\ref{eqn:opt_case1}), we consider a utility proportional fairness objective function, based on carrier aggregation, that the eNodeB seeks to maximize when it performs RA for carrier $j$.\\
\\\textit{Case$\;2.$ RA Optimization Problem when $ C_i < r_i^{\text{req}}$ for any user $i \in \mathcal{M}_j$ and $\sum_{i=1}^{M_j^{\text{VIP}}} q_i^j \geq R_j$ where $q_i^j=0$ if $C_i \geq r_i^{\text{req}}$ and $q_i^j=r_i^{\text{req}}-C_i$ if $C_i < r_i^{\text{req}}$}:

The eNodeB selects the optimization problem of this case to allocate the $j^{th}$ carrier resources if the total aggregated rate $C_i$ for any user $i$ is less than the user's application minimum required rate $r_i^{\text{req}}$ and $\sum_{i=1}^{M_j^{\text{VIP}}}q_i^j$ for VIP users in $\mathcal{M}_j^{\text{VIP}}$ is greater than or equal the carrier's available resources $R_j$. In this case, the eNodeB allocates the $j^{th}$ carrier resources only to VIP UEs in $\mathcal{M}_j^{\text{VIP}}$ as they are considered more important and regular users in $\mathcal{M}_j^{\text{Reg}}$ are not allocated any of the $j^{th}$ carrier resources since the carrier's resources are limited. The RA optimization problem for the $j^{th}$ carrier in this case is given by:

\begin{equation}\label{eqn:opt_case2}
\begin{aligned}
& \underset{\textbf{r}^j}{\text{max}}
& & \prod_{i=1}^{M_j^{\text{VIP}}}U_i(C_i+C_i^j+r_i^j) \\
& \text{subject to}
& & \sum_{i=1}^{M_j^{\text{VIP}}}r_i^{j,\text{all}} \leq R_j,\;\;r_i^{j,\text{all}} \geq 0\\
& & & C_i=\sum_{l=1}^{j-1}r_i^{l,\text{all}},\;\;r_i^{j,\text{all}}=r_i^j+C_i^j\\
& & & C_i^j=0\\
& & & q_i^j=
\begin{cases}
	0 \;\;\;\;\;\;\;\;\;\;\;\;\; \text{if}\;\; C_i \geq r_i^{\text{req}}\\
	r_i^{\text{req}}-C_i \;\;\text{if}\;\; C_i < r_i^{\text{req}}
\end{cases}\\
& & & \sum_{i=1}^{M_j^{\text{VIP}}}q_i^j \geq R_j, \;\;\; i = 1,2, ...,M_j^{\text{VIP}},\\
\end{aligned}
\end{equation}
where $\textbf{r}^j =\{r_1^j,r_2^j,...,r_{M_j^{\text{VIP}}}^j\}$, $C_i^j=0$ and $M_j^{\text{VIP}}$ is the number of users in $\mathcal{M}_j^{\text{VIP}}$. After the eNodeB performs the RA process for the $j^{th}$ carrier by solving optimization problem (\ref{eqn:opt_case2}), each VIP user in $\mathcal{M}_j^{\text{VIP}}$ is allocated a rate that is equivalent to $r_i^{j,\text{all}}$ by carrier $j$ whereas users in $\mathcal{M}_j^{\text{Reg}}$ are not allocated any of the $j^{th}$ carrier resources. The total rate allocated by the eNodeB to each user is equivalent to $C_i+r_i^{j,\text{all}}$. In optimization problem (\ref{eqn:opt_case2}), we consider a utility proportional fairness objective function, based on carrier aggregation, that the eNodeB seeks to maximize when it performs RA for carrier $j$.\\
\\\textit{Case$\;3.$ RA Optimization Problem when $C_i < r_i^{\text{req}}$ for any user $i \in \mathcal{M}_j^{\text{VIP}}$ and $\sum_{i=1}^{M_j^{\text{VIP}}} q_i^j < R_j$ where $q_i^j=0$ if $C_i \geq r_i^{\text{req}}$ and $q_i^j=r_i^{\text{req}}-C_i$ if $C_i < r_i^{\text{req}}$}:

The eNodeB selects the optimization problem of this case to allocate the $j^{th}$ carrier resources if the total aggregated rate $C_i$ for any user $i$ is less than the user's application minimum required rate $r_i^{\text{req}}$ and the summation $\sum_{i=1}^{M_j^{\text{VIP}}}q_i^j$ for VIP users in $\mathcal{M}_j^{\text{VIP}}$ is less than the carrier's available resources $R_j$. In this case, the eNodeB allocates the $j^{th}$ carrier resources to all UEs in $\mathcal{M}_j$. The RA optimization problem for the $j^{th}$ carrier in this case is given by:

\begin{equation}\label{eqn:opt_case3}
\begin{aligned}
& \underset{\textbf{r}^j}{\text{max}}
& & \prod_{i=1}^{M_j}U_i(C_i+C_i^j+r_i^j) \\
& \text{subject to}
& & \sum_{i=1}^{M_j}r_i^{j,\text{all}} \leq R_j,\;\;r_i^{j,\text{all}} \geq 0\\
& & & C_i=\sum_{l=1}^{j-1}r_i^{l,\text{all}},\;\;r_i^{j,\text{all}}=r_i^j+C_i^j\\
& & & C_i^j=
\begin{cases}
	0 \;\;\;\;\;\;\;\;\;\;\;\;\; \text{if}\;\; C_i \geq r_i^{\text{req}}\\
	r_i^{\text{req}}-C_i \;\;\text{if}\;\; C_i < r_i^{\text{req}}
\end{cases}\\
& & & q_i^j=
\begin{cases}
	0 \;\;\;\;\;\;\;\;\;\;\;\;\; \text{if}\;\; C_i \geq r_i^{\text{req}}\\
	r_i^{\text{req}}-C_i \;\;\text{if}\;\; C_i < r_i^{\text{req}}
\end{cases}\\
& & & \sum_{i=1}^{M_j^{\text{VIP}}}q_i^j < R_j, \;\; i = 1,2, ...,M_j,\\
\end{aligned}
\end{equation}
where $\textbf{r}^j =\{r_1^j,r_2^j,...,r_{M_j}^j\}$ and $M_j$ is the number of users in $\mathcal{M}_j$. After the eNodeB performs the RA process for the $j^{th}$ carrier by solving optimization problem (\ref{eqn:opt_case3}), each user in $\mathcal{M}_j$ is allocated a rate that is equivalent to $r_i^{j,\text{all}}$ by carrier $j$ and the total rate allocated by the eNodeB to each user is equivalent to $C_i+r_i^{j,\text{all}}$. In optimization problem (\ref{eqn:opt_case3}), we consider a utility proportional fairness objective function, based on carrier aggregation, that the eNodeB seeks to maximize when it performs RA for carrier $j$.

Each of the three RA optimization problems (\ref{eqn:opt_case1}), (\ref{eqn:opt_case2}) and (\ref{eqn:opt_case3}) of the $j^{th}$ carrier can be expressed by the following generalized optimization problem:
\begin{equation}\label{eqn:opt_General}
\begin{aligned}
& \underset{\textbf{r}^j}{\text{max}}
& & \prod_{i=1}^{|\alpha_j|}U_i(C_i+C_i^j+r_i^j) \\
& \text{subject to}
& & \sum_{i=1}^{|\alpha_j|}r_i^{j,\text{all}} \leq R_j,\;\;r_i^{j,\text{all}} \geq 0\\
& & & C_i=\sum_{l=1}^{j-1}r_i^{l,\text{all}},\;\;r_i^{j,\text{all}}=r_i^j+C_i^j\\
& & & q_i^j=
\begin{cases}
	0 \;\;\;\;\;\;\;\;\;\;\;\;\; \text{if}\;\; C_i \geq r_i^{\text{req}}\\
	r_i^{\text{req}}-C_i \;\;\text{if}\;\; C_i < r_i^{\text{req}}
\end{cases}\\
& & & i = 1,2, ...,|\alpha_j|,\\
\end{aligned}
\end{equation}
where $C_i^j$ and $\alpha_j$ in (\ref{eqn:opt_General}) are given by
\begin{align*}
C_i^j=
\begin{cases}
	0 \;\;\;\;\;\;\;\;\;\;\;\;\; \text{if}\;\; C_i \geq r_i^{\text{req}}\\
	r_i^{\text{req}}-C_i \;\;\text{if}\;\; C_i < r_i^{\text{req}}\;\;\text{and}\;\;\sum_{i=1}^{|\mathcal{M}_j^{\text{VIP}}|}q_i^j < R_j\\
    0 \;\;\;\;\;\;\;\;\;\;\;\;\; \text{if}\;\; C_i < r_i^{\text{req}}\;\;\text{and}\;\;\sum_{i=1}^{|\mathcal{M}_j^{\text{VIP}}|}q_i^j \geq R_j\\
\end{cases}\\
\end{align*}

\begin{equation}\label{eqn:alpha}
\begin{aligned}
\alpha_j=
\begin{cases}
	\mathcal{M}_j \;\; & \text{if}\;\; C_i \geq r_i^{\text{req}}\:\:\forall i \in \mathcal{M}_j\\
	\mathcal{M}_j^{\text{VIP}}\;\;\; & \text{if}\;\; C_i < r_i^{\text{req}} \; \text{for any user} \; i \in \mathcal{M}_j \\
    & \; \text{and} \; \sum_{i=1}^{M_j^{\text{VIP}}} q_i^j \geq R_j\\
    \mathcal{M}_j \;\; & \text{if}\;\; C_i < r_i^{\text{req}} \; \text{for any user} \; i \in \mathcal{M}_j^{\text{VIP}}\\
    & \; \text{and} \; \sum_{i=1}^{M_j^{\text{VIP}}} q_i^j < R_j
\end{cases}
\end{aligned}
\end{equation}
where $\textbf{r}^j =\{r_1^j,r_2^j,...,r_{|\alpha_j|}^j\}$, $\alpha_j$ is a set of users located under the coverage area of carrier $j$ that is equivalent to $\mathcal{M}_j$ or $\mathcal{M}_j^{\text{VIP}}$ based on certain conditions as shown in (\ref{eqn:alpha}) and $|\alpha_j|$ is the number of users in $\alpha_j$.

The objective function in optimization problem (\ref{eqn:opt_General}) is equivalent to $\sum_{i=1}^{|\alpha_j|}\log U_i(C_i+C_i^j+r_i^j)$. Later in this section we prove that optimization problem (\ref{eqn:opt_General}) is a convex optimization problem and there exists a unique tractable global optimal solution. Once the eNodeB is done performing the RA process, for the $j^{th}$ carrier, by solving optimization problem (\ref{eqn:opt_General}), each user in $\alpha_j$ is allocated a rate that is equivalent to $r_i^{j,\text{all}}=r_i^j+C_i^j$ and the user's total aggregated rate allocated by the eNodeB from carriers $\{1,2,...,j\}$ is given by $\sum_{l=1}^{j}r_i^{l,\text{all}}$.

\begin{lem}\label{lem:concavity}
The utility functions $\log U_i(C_i+C_i^j+r_i^j)$ in optimization problem (\ref{eqn:opt_General}) are strictly concave functions.
\end{lem}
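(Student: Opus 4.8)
The plan is to show that $\log U_i(\cdot)$ is strictly concave by treating the two admissible forms of $U_i$ separately, since every application utility is either a normalized sigmoidal-like function~(\ref{eqn:sigmoid}) or a normalized logarithmic function~(\ref{eqn:log}). Observe first that the argument $z_i := C_i + C_i^j + r_i^j$ is an affine function of the decision variable $r_i^j$ (the quantities $C_i$ and $C_i^j$ are constants in the $j$-th stage), and composition with an affine map preserves concavity; so it suffices to prove $\log U_i(z)$ is strictly concave in $z$ on the relevant domain. I would compute $\frac{d^2}{dz^2}\log U_i(z) = \frac{U_i''(z)U_i(z) - (U_i'(z))^2}{U_i(z)^2}$ and argue this is strictly negative in each case.

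For the logarithmic utility~(\ref{eqn:log}), $\log U_i(z) = \log\log(1+k_i z) - \log\log(1+k_i r_i^{\max})$; the second term is a constant, and $g(z) := \log\log(1+k_i z)$ has $g'(z) = \frac{k_i}{(1+k_i z)\log(1+k_i z)}$, which is manifestly strictly decreasing in $z$ for $z \ge 0$ (both factors in the denominator are positive and increasing), so $g'' < 0$ and strict concavity follows. For the sigmoidal-like utility~(\ref{eqn:sigmoid}), since $c_i$ is a positive constant, $\log U_i(z) = \log c_i + \log\!\bigl(\frac{1}{1+e^{-a_i(z-b_i)}} - d_i\bigr)$; writing $\sigma(z) = \frac{1}{1+e^{-a_i(z-b_i)}}$ for the logistic function, one needs $\log(\sigma(z) - d_i)$ strictly concave. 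I would use the identities $\sigma' = a_i\sigma(1-\sigma)$ and the known fact (cf.\ the analogous argument in \cite{Ahmed_Utility1,Ahmed_Utility2}) that for the normalized sigmoid these utilities are strictly concave in the region above the inflection point, and more carefully verify the sign of $\frac{d^2}{dz^2}\log(\sigma(z)-d_i)$ directly; after substituting $\sigma' = a_i\sigma(1-\sigma)$ and simplifying, the numerator reduces to a negative expression in $\sigma$, $d_i$, and $a_i$ on the domain where $\sigma(z) > d_i$ (which is exactly where $U_i(z) > 0$, i.e.\ $z > 0$).

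I expect the sigmoidal case to be the main obstacle, since $\log U_i$ is not concave over all of $r_i^j \ge 0$ for a raw sigmoid — concavity of the logarithm holds on the full positive axis only because of the normalization and because the feasible rates in the algorithm keep the argument in the concave region. The delicate point is to state precisely on which domain the claim holds: I would argue that the relevant feasibility region produced by the earlier stages and the constraint $\sum_i r_i^{j,\text{all}} \le R_j$ keeps each $z_i$ in the interval where strict concavity is valid, or else cite the corresponding lemma structure of \cite{Ahmed_Utility1} which establishes that $\log$ of a normalized sigmoidal utility is strictly concave for $r_i \ge 0$ under the stated parametrization. Once both cases give $\frac{d^2}{dz^2}\log U_i(z) < 0$ strictly, concavity is preserved under the affine substitution $z = C_i + C_i^j + r_i^j$, completing the proof.
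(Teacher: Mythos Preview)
Your approach is essentially the paper's: split into the logarithmic and sigmoidal-like cases and verify that the second derivative of $\log U_i$ is strictly negative, using the quotient-rule form $\frac{U_i''U_i-(U_i')^2}{U_i^2}$ for the logarithmic case and a direct computation for the sigmoid. The only substantive difference is that your hedging about the sigmoidal case is unnecessary: the paper simply writes out $\frac{d^2}{d(r_i^j)^2}\log U_i$ for the normalized sigmoid and observes that it decomposes as a sum of two manifestly negative terms for every $r_i^j>0$ (no appeal to the inflection point or to the feasible region is needed---the normalization by $d_i$ is exactly what makes $\log U_i$ globally concave on $z>0$), so you can drop the caveats and just carry out the derivative computation you outlined.
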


\begin{proof}
The utility functions are assumed to be logarithmic or sigmoidal-like functions as discussed in Section \ref{sec:utility functions}. Therefore, $U_i(C_i+C_i^j+r_i^j)$ is a strictly concave (i.e. in the case of logarithmic utility functions) or a sigmoidal-like function of the total aggregated rate $C_i+C_i^j+r_i^j$ allocated to user $i$ application from carriers $\{1,2,...,j\}$ after performing the RA process of the $j^{th}$ carrier by the eNodeB.

In the case of logarithmic utility function, recall the utility function properties in Section \ref{sec:utility functions}, the utility function of the application rate is positive, increasing and twice differentiable with respect to the application rate. It follows that $U'_i(C_i+C_i^j+r_i^j) = \frac{dU_i(C_i+C_i^j+r_i^j)}{dr_i^j} > 0$ and $U''_i(C_i+C_i^j+r_i^j) = \frac{d^2U_i(C_i+C_i^j+r_i^j)}{{dr_i^j}^2} < 0$, i.e. since $C_i+C_i^j$ is greater or equal zero. Then the function $\log U_i(C_i+C_i^j+r_i^j)$ has $\frac{d \log (U_i(C_i+C_i^j+r_i^j))}{dr_i^j}=\frac{U_i'(C_i+C_i^j+r_i^j)}{U_i(C_i+C_i^j+r_i^j)} > 0$ and $\frac{d^2 \log (U_i(C_i+C_i^j+r_i^j))}{{dr_i^j}^2}=\frac{U''_i(C_i+C_i^j+r_i^j)U_i(C_i+C_i^j+r_i^j)-U'^2_i(C_i+C_i^j+r_i^j)}{U^2_i(C_i+C_i^j+r_i^j)} < 0$. Therefore, the natural logarithm of the logarithmic utility function $\log(U_i(C_i+C_i^j+r_i^j))$ is strictly concave.

On the other hand, in the case of sigmoidal-like utility function, the normalized sigmoidal-like function is given by $U_i(C_i+C_i^j+r_i^j)=c_i\Big(\frac{1}{1+e^{-a_i(C_i+C_i^j+r_i^j-b_i)}}-d_i\Big)$. For $0 < r_i^j < (R_j-C_i^j)$, we have

\begin{equation*}\label{eqn:sigmoid_bound}
\begin{aligned}
0&<c_i\Big(\frac{1}{1+e^{-a_i(C_i+C_i^j+r_i^j-b_i)}}-d_i\Big)<1\\
d_i&<\frac{1}{1+e^{-a_i(C_i+C_i^j+r_i^j-b_i)}}<\frac{1+c_id_i}{c_i}\\
\frac{1}{d_i}&>{1+e^{-a_i(C_i+C_i^j+r_i^j-b_i)}}>\frac{c_i}{1+c_id_i}\\
0&<1-d_i({1+e^{-a_i(C_i+C_i^j+r_i^j-b_i)}})<\frac{1}{1+c_id_i}\\
\end{aligned}
\end{equation*}

It follows that for $0 < r_i^j < (R_j-C_i^j)$, we have the first and second derivatives as
\begin{align*}
\frac{d}{dr_i^j}\log U_i(C_i+&C_i^j+r_i^j) =\\
& \frac{a_id_i e^{-a_i(C_i+C_i^j+r_i^j-b_i)}}{1-d_i(1+e^{-a_i(C_i+C_i^j+r_i^j-b_i)})} \\
&  + \frac{a_ie^{-a_i(C_i+C_i^j+r_i^j-b_i)}}{(1+e^{-a_i(C_i+C_i^j+r_i^j-b_i)})}>0\\
\frac{d^2}{{dr_i^j}^2}\log U_i(C_i+&C_i^j+r_i^j) =\\
& \frac{-a_i^2d_ie^{-a_i(C_i+C_i^j+r_i^j-b_i)}}{c_i\Big(1-d_i(1+e^{-a(C_i+C_i^j+r_i^j-b_i)})\Big)^2} \\
&  + \frac{-a_i^2e^{-a_i(C_i+C_i^j+r_i^j-b_i)}}{(1+e^{-a_i(C_i+C_i^j+r_i^j-b_i)})^2} < 0.\\
\end{align*}
Therefore, the natural logarithm of the sigmoidal-like utility function $\log (U_i(C_i+C_i^j+r_i^j)$ is strictly concave function. Therefore, the utility functions natural logarithms have strictly concave natural logarithms in both cases of logarithmic utility functions and sigmoidal-like utility functions.
\end{proof}

Theorem \ref{thm:global_soln} proves the convexity of optimization problem (\ref{eqn:opt_General}).

\begin{thm}\label{thm:global_soln}
Optimization problem (\ref{eqn:opt_General}) is a convex optimization problem and there exists a unique tractable global optimal solution.
\end{thm}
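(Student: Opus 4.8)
The plan is to establish convexity of optimization problem (\ref{eqn:opt_General}) by rewriting it in standard form and then invoking Lemma \ref{lem:concavity} together with standard facts about convex programs. First I would observe that maximizing the product $\prod_{i=1}^{|\alpha_j|}U_i(C_i+C_i^j+r_i^j)$ is equivalent to maximizing its natural logarithm, since $\log$ is monotonically increasing and each $U_i$ is positive on the feasible region; hence the objective becomes $\sum_{i=1}^{|\alpha_j|}\log U_i(C_i+C_i^j+r_i^j)$. Equivalently, the problem is the minimization of $-\sum_{i=1}^{|\alpha_j|}\log U_i(C_i+C_i^j+r_i^j)$, which I want to show is a convex objective over a convex constraint set.

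Next I would verify the three ingredients of a convex program. (i) The objective: by Lemma \ref{lem:concavity}, each map $r_i^j \mapsto \log U_i(C_i+C_i^j+r_i^j)$ is strictly concave (recall $C_i$ and $C_i^j$ are constants once carrier $j$ is being processed, so this is genuinely a function of the single variable $r_i^j$), and a sum of strictly concave functions of disjoint coordinate blocks is strictly concave in $\mathbf{r}^j$; therefore the negated sum is strictly convex. (ii) The feasible set: the constraint $\sum_{i=1}^{|\alpha_j|} r_i^{j,\text{all}} \le R_j$ with $r_i^{j,\text{all}} = r_i^j + C_i^j$ is a single affine (half-space) constraint in $\mathbf{r}^j$, and the box constraints $r_i^{j,\text{all}} \ge 0$ (equivalently $r_i^j \ge -C_i^j$) are affine as well; the remaining relations ($C_i$, $C_i^j$, $q_i^j$) are just definitions of constants, not constraints on the optimization variable. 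The intersection of finitely many half-spaces is a convex polytope, which is nonempty (e.g. $r_i^j = 0$ for all $i$ is feasible when $C_i^j=0$, and more generally feasibility in each of the three cases follows from the case conditions, in particular $\sum q_i^j < R_j$ in Case 3). (iii) Combining (i) and (ii): minimizing a strictly convex function over a nonempty compact convex set yields existence, and strict convexity yields uniqueness, of the global minimizer; I would cite a standard reference (e.g. Boyd and Vandenberghe) for the fact that any local optimum of a convex program is global and for existence via Weierstrass on the compact feasible set.

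Finally, for the tractability claim I would note that the problem satisfies Slater's condition (the affine constraints admit a strictly feasible point interior to the nonnegativity box whenever $R_j>0$), so the KKT conditions are necessary and sufficient, and the optimum can be computed efficiently — in fact the Lagrangian decouples across users given the single coupling multiplier on the resource constraint, leading to a one-dimensional root-finding problem in that multiplier, which I would flag as the route the algorithm in Section \ref{sec:Algorithm} exploits.

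I expect the only subtle point — the ``main obstacle'' — to be bookkeeping rather than mathematics: one must be careful that $C_i^j$ is a \emph{constant} (fixed before the carrier-$j$ subproblem is solved, determined by the case logic in (\ref{eqn:alpha})) and not an optimization variable, so that $\log U_i(C_i+C_i^j+r_i^j)$ really is a concave function of $r_i^j$ alone and the constraint $r_i^{j,\text{all}} = r_i^j + C_i^j$ is affine; and that the change of variables between $r_i^j$ and $r_i^{j,\text{all}}$ is an invertible affine map, so convexity is preserved in either coordinate system. Everything else is a direct application of Lemma \ref{lem:concavity} and the elementary calculus of convex sets and functions.
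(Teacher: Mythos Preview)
Your proposal is correct and follows essentially the same approach as the paper: invoke Lemma~\ref{lem:concavity} to obtain strict concavity of each $\log U_i$, conclude that the (log-form) objective is strictly concave over an affine/polyhedral feasible set, and cite Boyd for existence/uniqueness/tractability. In fact your write-up is considerably more careful than the paper's own two-sentence proof --- in particular your observations that $C_i$, $C_i^j$ are constants at stage $j$ and that the $r_i^j \leftrightarrow r_i^{j,\text{all}}$ map is affine are exactly the points the paper leaves implicit.
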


\begin{proof}
It follows from Lemma \ref{lem:concavity} that all UEs utility functions of applications rates are strictly concave. Therefore, optimization problem (\ref{eqn:opt_General}) is a convex optimization problem. For a convex optimization problem there exists a unique tractable global optimal solution \cite{Boyd}.
\end{proof}

\section{RA Optimization Algorithm}\label{sec:Algorithm}

In this section, we present our multi-carrier resource allocation with user discrimination algorithm. The proposed algorithm consists of UE and eNodeB parts shown in Algorithm \ref{alg:UE} and Algorithm \ref{alg:eNodeB}, respectively. The execution of the algorithm starts by UEs, subscribing for mobile services, transmitting their application utility parameters to the eNodeB, which allocates available carriers' resources to UEs based on a proportional fairness policy. First, the eNodeB performs the user grouping method described in Section \ref{sec:UsersGrouping} for each carrier by creating three user group sets $\mathcal{M}_j^{\text{VIP}}$, $\mathcal{M}_j^{\text{Reg}}$ and $\mathcal{M}_j$ for UEs located within the coverage area of the $j^{th}$ carrier. It then starts performing the RA process to allocate the carriers resources starting with carrier $1$ in $\mathcal{K}$ (i.e. the carrier with the smallest coverage radius) in ascending order $1 \rightarrow K$. In order to allocate certain carrier's resources, the eNodeB performs the RA process that corresponds to the most appropriate case among the three cases presented in Section \ref{sec:ResourceAllocation}. From optimization problem (\ref{eqn:opt_General}), we have the following Lagrangian

\begin{equation}\label{eqn:lagrangian}
\begin{aligned}
L(\textbf{r}^j,p^j) =
& \sum_{i=1}^{|\alpha_j|}\log U_i(C_i+C_i^j+r_i^j)\\
& -p^j(\sum_{i=1}^{|\alpha_j|}(C_i^j+r_i^j) + \sum_{i=1}^{|\alpha_j|}z_{i} - R_j),\\
\end{aligned}
\end{equation}

where $z_{i}\geq 0$ is the slack variable and $p^j$ is Lagrange multiplier that represents the shadow price (price per unit bandwidth for all the $|\alpha_j|$ channels). The rates, solutions to equation (\ref{eqn:opt_General}), are the values $r_i^j$ which solve equation $\frac{\partial \log U_i(C_i+C_i^j+r_i^j)}{\partial r_i^j} = p^j$ and are the intersection of the time varying shadow price, horizontal line $y = p^j$, with the curve $y = \frac{\partial \log U_i(C_i+C_i^j+r_i^j)}{\partial r_i^j}$ geometrically. The rate allocated by carrier $j$ to the $i^{th}$ UE is equivalent to $r_i^{j,\text{all}}=r_i^j+C_i^j$. When the eNodeB is done allocating the $K^{th}$ carrier resources, each user is then allocated its final aggregated rate $r_i = \sum_{j=1}^{K}r_i^{j,\text{all}}$.

\begin{algorithm}
\caption{The $i^{th}$ UE Algorithm}\label{alg:UE}
\begin{algorithmic}
\LOOP
      \STATE {Send application utility parameters $k_i$, $a_i$, $b_i$, $r_i^{\text{max}}$ and $r_i^{\text{req}}$ to eNodeB.}
      \STATE {Receive the final allocated rate $r_i$ from the eNodeB.}
\ENDLOOP
\end{algorithmic}
\end{algorithm}

\begin{algorithm}
\caption{The eNodeB Algorithm}\label{alg:eNodeB}
\begin{algorithmic}
\LOOP
\STATE {Initialize $C_i=0$; $C_i^j=0$; $r_i^{j,\text{all}}=0$.}
\STATE {Receive application utility parameters $k_i$, $a_i$, $b_i$, $r_i^{\text{max}}$ and $r_i^{\text{req}}$ from all UEs in $\mathcal{M}$.}%
\FOR {$j \leftarrow 1$  to  $K$}
 \STATE{Create user groups $\mathcal{M}_j^{\text{VIP}}$, $\mathcal{M}_j^{\text{Reg}}$ and $\mathcal{M}_j$ for UEs located within the coverage area of the $j^{th}$ carrier.}
\ENDFOR
\FOR {$i \leftarrow 1$  to  $|\mathcal{M}_j|$}
 \STATE{Create carrier group $\mathcal{K}_i$ for the $i^{th}$ UE's all in range carriers.}
\ENDFOR
\FOR {$j \leftarrow 1$  to  $K$}
 \IF{$C_i < r_i^{\text{req}}$}
 \STATE{$q_i^j=r_i^{\text{req}}-C_i$}
 \ELSE
 \STATE{$q_i^j=0$}
 \ENDIF
 \IF{$ C_i \geq r_i^{\text{req}}\:\:\forall i \in \mathcal{M}_j$}
 \STATE{$C_i^j=0$}
 \STATE{Solve $\textbf{r}^j =  \arg \underset{\textbf{r}^j}\max \sum_{i=1}^{|\mathcal{M}_j|}\log U_i(C_i+C_i^j+r_i^j) - p^j(\sum_{i=1}^{|\mathcal{M}_j|}(r_i^j+C_i^j)-R_j)$.}
 \STATE{Allocate rate $r_i^{j,\text{all}}=r_i^j+C_i^j$ by the $j^{th}$ carrier to each user in $\mathcal{M}_j$.}
 \STATE{Calculate new $C_i=C_i+r_i^{j,\text{all}}\:\:\forall i \in \mathcal{M}_j$}
  \ELSIF{$C_i < r_i^{\text{req}}$ for any user $i \in \mathcal{M}_j$ $\&\&$ $\sum_{i=1}^{M_j^{\text{VIP}}} q_i^j \geq R_j$}
 \STATE{$C_i^j=0$}
 \STATE{Solve $\textbf{r}^j =  \arg \underset{\textbf{r}^j}\max \sum_{i=1}^{|\mathcal{M}_j^{\text{VIP}}|}\log U_i(C_i+C_i^j+r_i^j) - p^j(\sum_{i=1}^{|\mathcal{M}_j^{\text{VIP}}|}(r_i^j+C_i^j)-R_j)$.}
 \STATE{Allocate rate $r_i^{j,\text{all}}=r_i^j+C_i^j$ by the $j^{th}$ carrier to each user in $\mathcal{M}_j^{\text{VIP}}$.}
 \STATE{Calculate new $C_i=C_i+r_i^{j,\text{all}}\:\:\forall i \in \mathcal{M}_j^{\text{VIP}}$}
 \ELSIF{$C_i < r_i^{\text{req}}$ for any user $i \in \mathcal{M}_j^{\text{VIP}}$ and $\sum_{i=1}^{|\mathcal{M}_j^{\text{VIP}}|}q_i^j < R_j$}
 \IF{$C_i < r_i^{\text{req}}$}
 \STATE{$C_i^j=r_i^{\text{req}}-C_i$}
 \ELSE
 \STATE{$C_i^j=0$}
 \ENDIF
 \STATE{Solve $\textbf{r}^j =  \arg \underset{\textbf{r}^j}\max \sum_{i=1}^{|\mathcal{M}_j|}\log U_i(C_i+C_i^j+r_i^j) - p^j(\sum_{i=1}^{|\mathcal{M}_j|}(r_i^j+C_i^j)-R_j)$.}
 \STATE{Allocate rate $r_i^{j,\text{all}}=r_i^j+C_i^j$ by the $j^{th}$ carrier to each user in $\mathcal{M}_j$.}
 \STATE{Calculate new $C_i=C_i+r_i^{j,\text{all}}\:\:\forall i \in \mathcal{M}_j$}
 \ENDIF
\ENDFOR
\STATE{Allocate total aggregated rate $r_i = \sum_{j=1}^{K}r_i^{j,\text{all}}$ by the eNodeB to each UE $i$ in $\mathcal{M}$}
\ENDLOOP
\end{algorithmic}
\end{algorithm}

\section{Simulation Results}\label{sec:sim}
Algorithm \ref{alg:UE} and \ref{alg:eNodeB} were applied in C++ to multiple utility functions with different parameters. Simulation results showed convergence to the global optimal rates. In this section, we consider a mobile cell with one eNodeB, two carriers with available resources and $8$ active UEs located under the coverage area of the eNodeB as shown in Figure \ref{fig:SystemModel}. The UEs are divided into two groups. The $1^{st}$ group of UEs (index $i=\{1,2,3,4\}$) represents user group $\mathcal{M}_1$ located within the coverage radius $D_1$ of carrier $1$. Each user in $\mathcal{M}_1$ belongs to one of the two classes of user groups, i.e. VIP user group and Regular user group, where $\mathcal{M}_1^{\text{VIP}}=\{2,4\}$, $\mathcal{M}_1^{\text{Reg}}=\{1,3\}$ and $\mathcal{M}_1=\mathcal{M}_1^{\text{VIP}}\cup \mathcal{M}_1^{\text{Reg}}$. On the other hand, the $2^{nd}$ group of UEs (index $i=\{1,2,3,4,5,6,7,8\}$) represents user group $\mathcal{M}_2$ located within the coverage radius $D_2$ of carrier $2$. Each user in $\mathcal{M}_2$ belongs to a VIP user group or a regular user group where $\mathcal{M}_2^{\text{VIP}}=\{2,4,6,8\}$, $\mathcal{M}_2^{\text{Reg}}=\{1,3,5,7\}$ and $\mathcal{M}_2=\mathcal{M}_2^{\text{VIP}}\cup \mathcal{M}_2^{\text{Reg}}$.

We use sigmoidal-like utility functions and logarithmic utility functions with different parameters to represent each of the users' applications. We use three normalized sigmoidal-like functions that are expressed by equation (\ref{eqn:sigmoid}) with different parameters. The used parameters are $a_i = 5$, $b_i=10$ that correspond to a sigmoidal-like function with inflection point $r_i =10$ which represents the utility of UE with index $i=\{5\}$, $a_i = 3$, $b_i=20$ that correspond to a sigmoidal-like function with inflection point $r_i=20$ which represents the utility of UE with index $i=\{1\}$, and $a_i = 1$,  $b_i=30$ that correspond to a sigmoidal-like function with inflection point $r_i=30$ which represents the utility of UEs with indexes $i=\{2,6\}$, as shown in Figure \ref{fig:utility}. We use three logarithmic functions expressed by equation (\ref{eqn:log}) with $r_i^{\text{max}} =100$ and different $k_i$ parameters to represent delay-tolerant applications. We use $k_i =15$ for UE with index $i=\{7\}$, $k_i =3$ for UE with index $i=\{3\}$, and $k_i = 0.5$ for UEs with indexes $i=\{4,8\}$, as shown in Figure \ref{fig:utility}. A summary is shown in table \ref{table:parameters}. We use an application minimum required rate that is equivalent to the inflection point of the sigmoidal-like function, i.e. $r_i^{\text{req}}=b_i$, for each VIP user running a real-time application, we use $r_i^{\text{req}}=15$ for each VIP user running a delay-tolerant application and $r_i^{\text{req}}=0$ for each regular user whether it is running real-time application or delay-tolerant application.
\begin{figure}[tb]
\centering
\includegraphics[height=2.5in, width=2.5in]{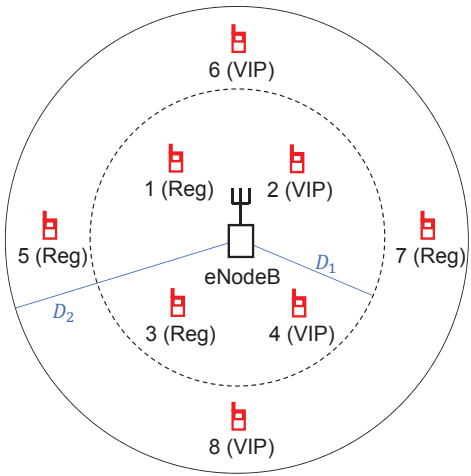}
\caption{System model for a mobile system with $M=8$ users and $K=2$ carriers available at the eNodeB. Carrier $1$ coverage radius is $D_1$ and carrier $2$ coverage radius is $D_2$ with $D_1 < D_2$. $\mathcal{M}_1=\{1,2,3,4\}$ and $\mathcal{M}_2=\{1,2,...,8\}$ represent the sets of user groups located under the coverage area of carrier $1$ and carrier $2$, respectively.}
\label{fig:SystemModel}
\end{figure}
\begin{figure}[tb]
\centering
\includegraphics[width=3.5in]{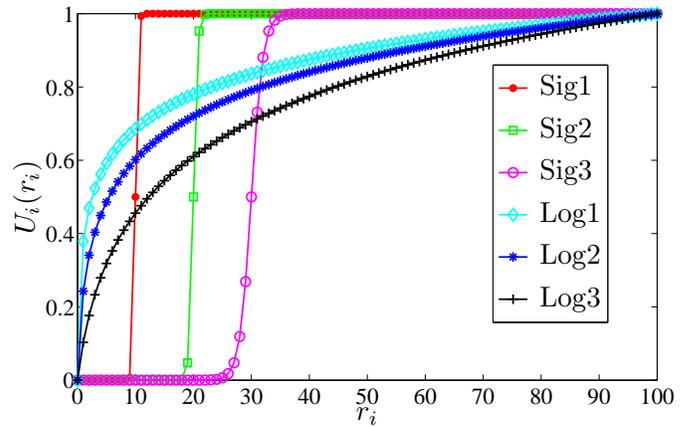}
\caption{The users utility functions $U_i(r_i)$ used in the simulation (three sigmoidal-like functions and three logarithmic functions).}
\label{fig:utility}
\end{figure}
\begin {table}[]
\caption {Users and their applications utilities}
\label{table:parameters}
\begin{center}
\renewcommand{\arraystretch}{1.4} 
\begin{tabular}{| l | l | l | }
  \hline
  \multicolumn{2}{|c|}{Applications Utilities Parameters} & \multicolumn{1}{|c|}{Users Indexes} \\  \hline
  Sig1 & Sig $a_i=5,\:\: b_i=10$  &  $i=\{5\}$ \\ \hline
  Sig2 & Sig $a_i=3,\:\: b_i=20$ & $i=\{1\}$  \\ \hline
  Sig3 & Sig $a_i=1,\:\: b_i=30$ & $i=\{2,6\}$   \\ \hline
  Log1 & Log $k_i=15,\:\: r_i^{\text{max}}=100$ & $i=\{7\}$   \\ \hline
  Log2 & Log $k_i=3,\:\: r_i^{\text{max}}=100$ & $i=\{3\}$   \\ \hline
  Log3 & Log $k_i=0.5,\:\: r_i^{\text{max}}=100$ & $i=\{4,8\}$ \\ \hline
\end{tabular}
\end{center}
\end {table}
\subsection{Carrier $1$ Allocated Rates for $60\le R_1 \le 150$}
In the following simulations, we set $\delta =10^{-3}$, carrier $1$ rate $R_1$ takes values between $60$ and $150$ with step of $10$. In Figure \ref{fig:ri_carrier1_all_R1}, we show the allocated rates $r_i^{1,\text{all}}$ of different users with different values of carrier $1$ total rate $R_1$ and observe how the proposed rate allocation algorithm converges for different values of $R_1$. In Figure \ref{fig:ri_carrier1_all_R1}, we show that both VIP and regular users in user group $\mathcal{M}_1$ are allocated resources by carrier $1$ when $60\le R_1 \le 150$ since carrier $1$ available resources $R_1$ is greater than the total applications minimum required rates for users in $\mathcal{M}_1$. Figure \ref{fig:ri_carrier1_all_R1} also shows that by using the proposed RA with user discrimination algorithm, no user is allocated zero rate (i.e. no user is dropped). However, carrier $1$ resources are first allocated to the VIP users until each of their applications reaches the application minimum required rate $r_i^{\text{req}}$. Then the majority of carrier $1$ resources are allocated to the UEs running adaptive real-time applications until they reach their inflection rates, the eNodeB then allocates more of carrier $1$ resources to UEs with delay-tolerant applications.
\begin{figure}[tb]
\centering
\includegraphics[height=2in, width=3.5in]{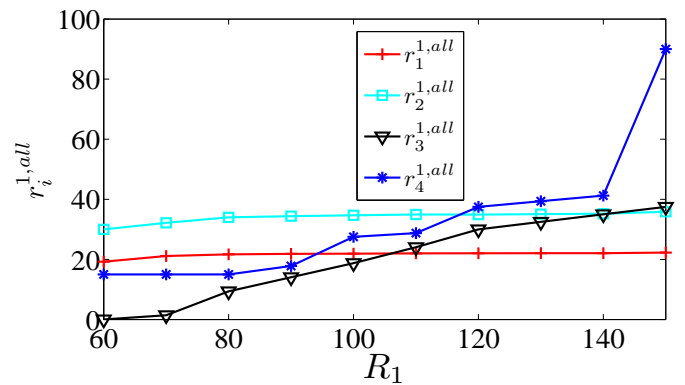}
\caption{The rates $r_i^{1,\text{all}}$ allocated from carrier $1$ to $\mathcal{M}_1$ user group with carrier $1$ available resources $60<R_1<150$.}
\label{fig:ri_carrier1_all_R1}
\end{figure}
\subsection{Carrier $2$ Allocated Rates and the Total Aggregated Rates for $10\le R_2 \le 150$}
In the following simulations, we set $\delta =10^{-3}$, carrier $2$ rate $R_2$ takes values between $10$ and $150$ with step of $10$ and carrier $1$ rate is fixed at $R_1 = 60$. In Figure \ref{fig:ri_carrier2}, we show the allocated rates $r_i^{2,\text{all}}$ and the final aggregated rates $r_i$ of different users with different values of carrier $2$ total rate $R_2$ and observe how the proposed rate allocation algorithm converges for different values of $R_2$. In Figure \ref{fig:ri_carrier2_all}, we show that when $10\le R_2 \le 45$ only VIP users in $\mathcal{M}_2$ (i.e. UEs in $\mathcal{M}_2^{\text{VIP}}$) that were not allocated resources by carrier $1$ or did not reach their applications minimum required rates are allocated resources by carrier $2$. Whereas when $45 < R_2 \le 150$, both VIP and regular users in $\mathcal{M}_2$ are allocated resources by carrier $2$ as carrier $2$ total rate $R_2$ is greater than $\sum_{i=1}^{M_2^{\text{VIP}}} q_i^2$ (i.e. the total required rates for UEs to reach their $r_i^{\text{req}}$). Figure \ref{fig:ri_carrier2_all} also shows that by using the proposed RA with user discrimination algorithm that is based on carrier aggregation, the eNodeB takes into consideration the rates allocated to users in $\mathcal{M}_2$ by carrier $1$ when allocating carrier $2$ resources. Carrier $2$ resources are first allocated to VIP users until each of their applications reaches the application minimum required rate $r_i^{\text{req}}$. Then the majority of carrier $2$ resources are allocated to the UEs running adaptive real-time applications until they reach their inflection rates, the eNodeB then allocates more of carrier $2$ resources to UEs with delay-tolerant applications.

Figure \ref{fig:ri_Aggregated} shows the total aggregated rates $r_i = \sum_{j=1}^{2}r_i^{j,\text{all}}$ for the $8$ users.

\begin{figure}[tb]
  \centering
  \subfigure[The rates $r_i^{2,\text{all}}$ allocated from carrier $2$ to $\mathcal{M}_2$ user group.]{%
  \label{fig:ri_carrier2_all}
  \includegraphics[width=3.5in]{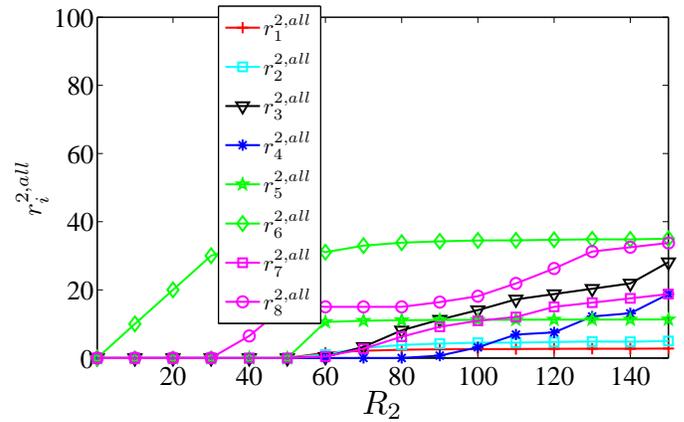}
  }\\%
\subfigure[The total aggregated rates $r_i$ allocated by the eNodeB to the $8$ users.]{%
  \label{fig:ri_Aggregated}
  \includegraphics[width=3.5in]{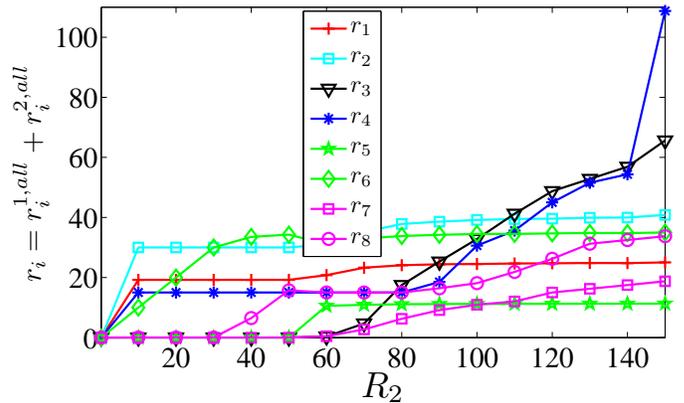}
  }%
\caption{The rates $r_i^{2,\text{all}}$ allocated from carrier $2$ to users in $\mathcal{M}_2$ and the total aggregated rates allocated to the $8$ users with carrier $2$ available resources $10<R_2<150$ and carrier $1$ resources fixed at $R_1=60$.}
\label{fig:ri_carrier2}
\end{figure}
\subsection{Pricing Analysis for Carrier $1$ and Carrier $2$}
In the following simulations, we set $\delta =10^{-3}$. In Figure \ref{fig:price_carrier1}, we show carrier $1$ shadow price with $60\leq R_1 \leq 150$. We observe that carrier $1$ price $p^1$ is traffic-dependant as it decreases for higher values of $R_1$. In Figure \ref{fig:price_carrier1_carrier2}, we show the offered price of carrier $2$ with $10\leq R_2 \leq 150$ and $R_1=60$. We observe that $p^2$ decreases when $R_2$ increases for $10\le R_2 \le 45$, only VIP users are allocated rates by carrier $2$ when $10\le R_2 \le 45$. However, we observe a jump in the price when $R_2=50$ as more users are considered in the rate allocation process (i.e VIP users and regular users in $\mathcal{M}_2$). Figure \ref{fig:price_carrier1_carrier2} also shows that carrier $2$ price $p^2$ decreases when $R_2$ increases for $50\le R_2 \le 150$.
\begin{figure}
\centering
\includegraphics[height=2in, width=3.5in]{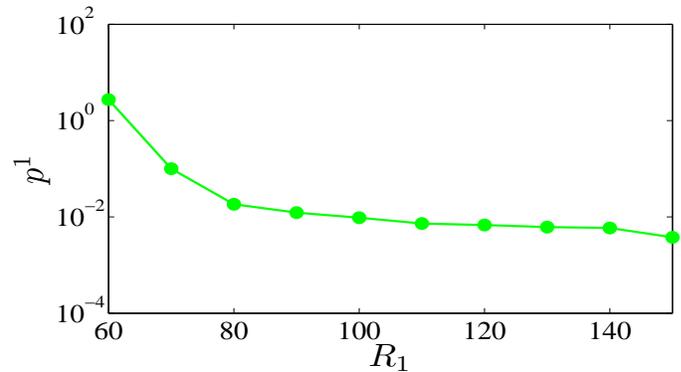}
\caption{Carrier $1$ shadow price $p^1$ with carrier $1$ resources $60<R_1<150$.}
\label{fig:price_carrier1}
\end{figure}

\begin{figure}[tb]
\centering
\includegraphics[height=2in, width=3.5in]{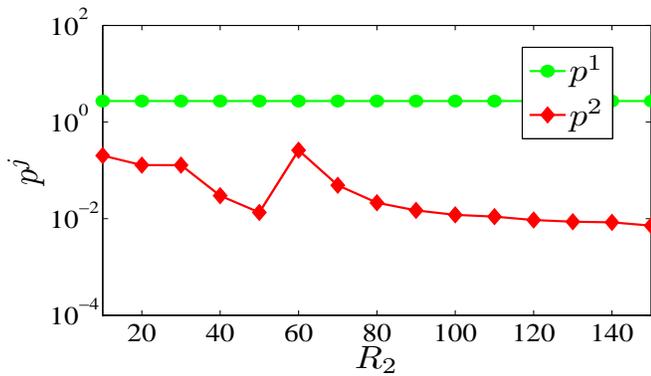}
\caption{Carrier $1$ shadow price $p^1$ and carrier $2$ shadow price $p^2$ with carrier $2$ resources $10<R_2<150$ and carrier $1$ resources fixed at $R_1=60$.}
\label{fig:price_carrier1_carrier2}
\end{figure}
\section{Conclusion}\label{sec:conclude}
In this paper, we proposed an efficient resource allocation with user discrimination approach for 5G systems to allocate multiple carriers resources optimally among UEs that belong to different user groups classes. We used utility functions to represent the applications running on the UEs. Each user is assigned a minimum required application rate based on its class and the type of its application. Users are partitioned into different user groups based on their class and the carriers coverage area. We presented resource allocation optimization problems based on carrier aggregation for different cases. We proved the existence of a tractable global optimal solution. We presented a RA algorithm for allocating resources from different carriers optimally among different classes of mobile users. The proposed algorithm ensures fairness in the utility percentage, gives priority to VIP users and within a VIP or a regular user group it gives priority to adaptive real-time applications while providing a minimum QoS for all users. We showed through simulations that the proposed resource allocation algorithm converges to the optimal rates. We also showed that the pricing provided by our algorithm depends on the traffic load.

\bibliographystyle{ieeetr}
\bibliography{pubs}
\end{document}